\newlist{todolist}{itemize}{2}
\setlist[todolist]{label=$\square$}
\newcommand{\ignore}[1]{}
\newcommand{\flr}{\rightarrow}
\newcommand{\dom}{\mathcal{D}om}
\newcommand{\vran}{\mathcal{V\!\!R}\!\mathit{an}}
\newtheorem{lemma}[section]{Lemma}
\newtheorem{theorem}{Theorem}
\newcommand*{\rom}[1]{\expandafter\@slowromancap\romannumeral #1@}
\title{On Problems Dual to Unification}
\author[1]{Z\"{u}mr\"{u}t Ak{\c c}am}
\author[1]{Daniel S. Hono II}
\author[1]{Paliath Narendran}
\affil[1]{University at Albany, SUNY\\ Albany, NY, US}
\affil[ ]{\texttt{\{zakcam, dhono, pnarendran\}@albany.edu}}
\date{}
\begin{document}
\maketitle

\begin{abstract}
  In this paper, we investigate a problem dual to the unification
  problem, namely the \emph{Common Term~(CT)} problem for string
  rewriting systems. Our main motivation is computing
  \emph{fixed points} in systems, such as \emph{loop invariants}
  in programming languages. We show that the fixed point problem
  is reducible to the common term problem. We also prove that the
  common term problem is undecidable for the class of \emph{dwindling} string
  rewriting systems.
\end{abstract}

\section{Introduction}
Unification, with or without background theories such as
associativity and commutativity, is an area of great theoretical
and practical interest. The latter problem, called \emph{equational}
or \emph{semantic} unification, has been studied from several different angles.
Here we investigate some problems that can be viewed as \emph{dual}
to the unification problem.
Our main motivation for this work is theoretical, but, as explained below,
we are also interested in a practical application that is shared by
many fields.

In every major research field, there are variables or other parameters that changes over
time. These variables are modified --- increased
or decreased --- as a
result of a change in the environment. 
Computing \emph{invariants,} or expressions whose values do not change
under a transformation, is very important in many areas 
such as Physics, e.g., invariance under the \emph{Lorentz} transformation.

In Computer Science, the issue of obtaining invariants arises
in \emph{axiomatic semantics} or \emph{Floyd-Hoare semantics},
in the context of formally proving a loop to be correct. A
\emph{loop invariant} is a
condition, over the program variables, 
that holds before and after each iteration. Our research
is partly motivated by the related question of
finding expressions, called \emph{fixed points}, whose values
will be the same before and after each iteration, i.e., will
remain unchanged as long as the iteration goes on. For instance,
for a loop whose body is \[ \text{\tt X = X + 2; Y = Y - 1;} \] the
value of the expression {\tt X + 2Y} is a fixed point.

We can formulate this problem in terms of properties of substitutions
\emph{modulo} a term rewriting system. One straightforward formulation
is as follows:\\

\noindent
{\large\bf Fixed Point Problem (FP)}

\begin{description}[align=left]
\item[Input:] A substitution $\theta$ and an equational theory~$E$.
\item[Question:] Does there exist a non-ground term~$t \, \in \, T(Sig(E), \, {\dom}(\theta))$ 
such that~$\theta (t) \, \approx_E^{} t$?
\end{description}

\vspace{0.05in}
\noindent
\underline{Example 1}: Suppose~$E$ is a theory of integers which contains linear
arithmetic. Let $\theta = \{x \mapsto x-2, \, y \mapsto y+1\}$ and we would like to find a term~$t$ such that 
$\theta (t) \approx_E^{} t$. Note that $x + 2y$ is such a term, since \[ \theta (x+2y) =
(x - 2) + 2*(y + 1) \approx_E^{} x+2y \] \\

We plan to explore two related formulations, both of which can be 
viewed as \emph{dual} to the well-known unification problem. Unification
deals with solving symbolic equations: thus a typical input would be either two
terms, say $s$ and $t$, or an equation~$s \approx_{}^? t$. The task is to
find a substitution such that $\theta (s) \approx \theta(t)$.
For example, given two terms $s_1^{} = f(a, y)$ and $s_2^{} = f(x, b)$, where $f$~is
a binary function symbol, $a$ and $b$ are constants, and $x$ and $y$
are variables, the substitution $\sigma = \{ x \mapsto a, ~ y \mapsto
b\}$ unifies $s_1^{}$ and~$s_2^{}$, or equivalently, $\sigma$ is a unifier
for the equation~$s_1^{} =_{}^? s_2^{}$.

%\pagebreak

There are two ways to ``dualize'' the unification problem:\\

\noindent
{\large\bf Common Term Problem (CT)}:

\begin{description}[align=left]
\item [Input:] Two ground substitutions $\theta_1^{}$ and $\theta_2^{}$, and an
  equational theory~$E$. (i.e., $\vran(\theta_1) = \emptyset$ and $\vran(\theta_2) = \emptyset$ )
\item [Question:] Does there exist a non-ground term~$t \, \in \, 
T(Sig(E), \, \dom(\theta_1) \cup \dom(\theta_2))$ such that
\mbox{$\theta_1^{} (t) \, \approx_E^{} \theta_2^{} (t)$}?
\end{description} %$t \, \in \, T(Sig(E), \, \dom(\theta))$%

\ignore{
We will be considering equational theories that 
are decomposable into a set of identities~$Ax$ and
a set of rewrite rules such that \emph{equational} rewriting 
modulo~$Ax$ is convergent. The most widely used case of
equational rewriting is where $Ax$ consists of associativity and
commutativity axioms~(\emph{AC}).
The key concepts are defined in the next section.
}

\vspace{0.05in}
\noindent
\underline{Example 2}: Consider the two substitutions $\theta_{1} = \{ x
\mapsto p(a), \, y \mapsto p(b) \}$ and $\theta_{2} = \{ x \mapsto a,
\, y \mapsto b \}$. If we take the term rewriting system~$R_1^{lin}$ in the appendix
as our background equational theory~$E$, then there exists a common term $t =
x-y$ that satisfies $\theta_{1}(t) \approx_E^{}
\theta_{2}(t)$. \[ \theta_1 ( x-y ) \approx_E^{} p(a)- p(b) \approx_E^{}
a-b \] and \[\theta_2 ( x-y ) \approx_E^{} a-b \]
\vspace{0.05in}

\pagebreak
We can easily show that the fixed point problem can be reduced to the CT problem. 
\begin{lemma}
The fixed point problem is reducible to the common term problem.
\end{lemma}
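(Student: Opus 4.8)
**The plan is to reduce FP to CT by a simple encoding that turns a single substitution together with "equality to the identity" into a pair of substitutions.**

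Given an instance of FP — a substitution $\theta$ and an equational theory $E$ — we need to produce an instance of CT, i.e. two *ground* substitutions and an equational theory, such that a fixed point exists iff a common term exists. First I would address the ground-vs-nonground mismatch: the FP definition allows $\theta$ to be an arbitrary substitution (its range may contain variables), whereas CT requires $\vran(\theta_1) = \vran(\theta_2) = \emptyset$. The standard trick is to freeze the variables of $\dom(\theta)$ by introducing fresh constants: for each $x \in \dom(\theta)$ add a new constant symbol $c_x$, let $\rho = \{x \mapsto c_x\}$ be the "freezing" substitution, and work over the extended signature $Sig(E) \cup \{c_x : x \in \dom(\theta)\}$. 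Then set $\theta_1 = \rho \circ \theta$ (first apply $\theta$, then freeze) and $\theta_2 = \rho$, both restricted to $\dom(\theta)$, and keep the equational theory $E$ unchanged. Both $\theta_1$ and $\theta_2$ are now ground.

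Next I would verify the equivalence in both directions. If $t$ is a non-ground term with $\theta(t) \approx_E t$, then applying $\rho$ to both sides and using that $\rho$ is a renaming (a bijection between $\dom(\theta)$ and the fresh constants, injective, and introducing no new collisions since the $c_x$ are fresh to $E$) gives $\rho(\theta(t)) \approx_E \rho(t)$, i.e. $\theta_1(t) \approx_E \theta_2(t)$, and $t$ is still non-ground, so it is a common term. Conversely, if $t$ is a non-ground common term, $\theta_1(t) \approx_E \theta_2(t)$, then since $E$ does not mention the fresh constants $c_x$, I can replace each $c_x$ uniformly by $x$ throughout the derivation — formally, apply the inverse renaming $\rho^{-1}$, which is well-defined on the terms occurring because the only way a $c_x$ can appear is via $\rho$ — and recover $\theta(t) \approx_E t$. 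Hence $t$ is a fixed point. Finally I would note the reduction is effective: constructing $\rho$, $\theta_1$, $\theta_2$ is clearly computable from the input.

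**The main obstacle** is not the algebra but making the "erase the fresh constants" step rigorous: one must argue carefully that an $E$-equational derivation of $\theta_1(t) \approx_E \theta_2(t)$ can be pulled back along the renaming. The clean way is to observe that, since the $c_x$ do not occur in any axiom of $E$, the congruence generated by $E$ is stable under the signature morphism that sends $c_x \mapsto x$ and fixes everything else; applying this morphism to a derivation yields a derivation of $\theta(t) \approx_E t$. I would also remark that if one prefers to state FP itself with ground-range substitutions, this freezing step is unnecessary and the reduction becomes the one-line observation $\theta_1 = \theta$, $\theta_2 = \mathit{id}$ — so the substance of the lemma is precisely the bookkeeping around fresh constants, which is why I would keep that part explicit.
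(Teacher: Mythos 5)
Your proposal is correct and is essentially the paper's own argument: the paper likewise freezes the relevant variables with fresh constants $a_1,\ldots,a_n$, takes one CT substitution to be the freezing map and the other its composition with $\theta$, and invokes the same key fact (cited there from Baader--Nipkow, Section~4.1) that $E$-equality is preserved and reflected when the constants do not occur in $E$. The only cosmetic differences are the swapped roles of $\theta_1,\theta_2$ and that the paper freezes the variables of $\vran(\theta)$ while you freeze $\dom(\theta)$ (freezing $\dom(\theta)\cup\vran(\theta)$ covers both readings); your closing remark about the identity substitution matches the paper's first, one-line variant of the proof.
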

\begin{proof}
Let $\theta_2$ be
the empty substitution. 
Assume that the fixed point problem has a solution, i.e., there exists
a term $t$ such that $\theta (t) \, \approx_E^{} t$. Then the CT problem
for $\theta$ and $\theta_2$ has a solution since
$\theta_2(t) \approx_E^{} t$ (because $\theta_2 (s) = s$ for all~$s$). The ``only if'' part is trivial, again
because $\theta_2 (s) = s$ for all~$s$.

Alternatively, suppose
that $\dom(\theta)$ consists of $n$~variables, where $n \geq 1$. If we
map all the variables in $\vran(\theta)$ to new constants, this will
create a ground substitution $\theta_1 = \{x_1
\mapsto a_1,\; x_2 \mapsto a_2,\; ..., \;x_n \mapsto
a_n\}$. $\theta_1$ will be the one of the substitutions for the CT
problem. The other substitution, $\theta_2$, is the composition of the
substitutions $\theta$ and $\theta_1$. The substitution $\theta_1$
will replace all of the variables in~$\vran(\theta)$ with the new
constants, thus making $\theta_2$ a ground substitution. Now
if $\theta (t) \, \approx_E^{} \, t$, then
$\theta_2 (t) = \theta_1 ( \theta (t) ) \approx_E^{} \theta_1( t )$; in other words,
$t$ is a solution to the common term problem.

The ``only if'' part can also be explained in terms of the composition
above. Suppose that $\theta_1(s)$ and $\theta_2(s)$ are equivalent,
i.e., $\theta_1(s) \, \approx_E^{} \, \theta_2(s)$ for some~$s$. Since
$\theta_2 = \theta_1 \circ \theta$, the equation can be rewritten as
$\theta_1(\theta(s)) \approx_E^{} \theta_1(s)$.
Since $a_1, \, \ldots , \, a_n$ are new constants and are not included in the signature of
the theory, for all $t_1$ and $t_2$, $\theta_1(t_1) \, \approx_E^{}
\theta_1(t_2)$ holds if and only if $t_1\approx_E^{} t_2$
(See~\cite{Term}, Section~4.1, page~60)
Thus $\theta_1(\theta(s)) \approx_E^{} \theta_1(s)$ implies that $\theta(s) \approx_E^{} s$, making
$s$~a fixed point. 
\end{proof}

\noindent
{\large\bf Common Equation Problem (CE)}:

\begin{description}[align=left]
\item[Input:] Two substitutions $\theta_1^{}$ and $\theta_2^{}$ with the \emph{same domain}, and an
  equational theory~$E$. 
\item[Question:] Does there exist a non-ground, non-trivial $(t_1
  \not\approx_E^{} t_2)$ equation~$t_1^{} \approx_{}^? t_2^{}$, where
  $t_1^{}, t_2^{} \, \in \, T(Sig(E), \, \dom(\theta_1^{}))$ such that
  both~$\theta_1^{}$ and~$\theta_2^{}$ are \emph{E}-unifiers
  of~$t_1^{} \approx_{}^? t_2^{}$?\\
\end{description} %$t_1^{}, t_2^{} \, \in \, T(Sig(E), \, \dom(\theta_1^{}))$%
By trivial equations, we mean equations which are identities
in the equational theory~$E$, i.e., an equation $s \approx_{E}^? t$ is
trivial if and only if $s \approx_{E}^{} t$.
We exclude this type of trivial equations in the formulation of
this question.

%We called the first problem in this dualization as the \emph{common
% term (CT) problem}, while the latter is the \emph{common equation (CE)
% problem}.

%We will investigate these two problems for 3~cases, the $\omega\Delta$
%case, the string rewriting case and linear arithmetic.

\vspace{0.05in}
\noindent
\underline{Example 3}: Let $\, E ~ = ~ \{ p(s(x)) \approx x, \; s(p(x)) \approx x \}$.
Given two substitutions $\theta_{1} = \{ x_1
\mapsto s(s(a)), \, x_2 \mapsto s(a) \}$ and $\theta_{2} = \{ x_1
\mapsto s(a), \, x_2 \mapsto a \}$, we can see that $\theta_{1}(t_1)
\approx_E^{} \theta_{1}(t_2)$ and $\theta_{2}(t_1) \approx_E^{}
\theta_{2}(t_2)$, with the equation \[ p(x_1) \approx_E^{} x_2\] However,
there is no term~$t$ on which the substitutions agree, i.e., there
aren't any solutions for the common term problem in this example.
Thus, CT and CE problems are not equivalent as we observe in the example above.\\[-7pt]

In this document we will discuss (and survey) these three problems for 
the string rewriting case.

\section{Definitions}
We start by presenting some notation and definitions on term rewriting
systems and particularly string rewriting systems. Only some
definitions are given in here, but for more details, refer to the
books~\cite{Term} for term rewriting systems and to~\cite{Botto} for
string rewriting systems.

A signature $\Sigma$ consists of finitely many ranked function symbols.
Let $X$~be a (possibly infinite) set of variables. The set
of all terms over~$\Sigma$ and~$X$ is denoted~as $T(\Sigma, X)$. 
The set of \emph{ground terms}, or terms with no variables
is denoted~$T(\Sigma)$.
A term
rewriting system~(TRS) is a set of rewrite rules that are defined on
the signature~$\Sigma$, in the form of $l \rightarrow r$, 
where $l$ and $r$
are called the left- and right-hand-side (\emph{lhs} and \emph{rhs}) of
the rule, respectively. The rewrite relation induced by
a term rewriting system~$R$ is denoted by~$\rightarrow_R^{}$. 
The reflexive
and transitive closure of~$\rightarrow_R^{}$ is denoted
$\rightarrow_R^{*}$. A TRS~$R$
is called \emph{terminating} iff there is no infinite chain of terms.
A TRS $R$~is \emph{confluent} iff, for all terms $t$,
$s_1$, $s_2$, if $s_1$ and $s_2$ can be derived from~$t$, i.e.,
$s_1 \leftarrow_R^{*} t \rightarrow_R^{*} s_2$, then there exists a 
term~$t'$ such that $s_1 \leftarrow_R^{*} t' \rightarrow_R^{*} s_2$.  A TRS
$R$ is \emph{convergent\/} iff it is both terminating and confluent.

\medskip{} A term is \emph{irreducible} iff no rule of TRS $R$ can be
applied to that term. The set of terms that are irreducible modulo~$R$
is defined by $IRR(R)$ and also called as terms in their \emph{normal
  form}s.  A term~$t'$ is said to be an \emph{R-normal form} of a term
$t$, iff it is irreducible and reachable from~$t$ in a finite number
of steps; this can be written as $t \rightarrow_{R}^{!} t'$.

String rewriting systems are a restricted class of term rewriting
systems where all functions are unary. These unary operators, that are
defined by the symbols of a string, applied in the order in which
these symbols appear in the string, i.e., if $g, h \in \Sigma$, the
string $gh$ will be seen as the term $h(g(x))$ \footnote{
  It may be more common to view $gh$ as $g(h(x))$ with
  function application done in the reverse order.
  }. The set of all strings
over the alphabet $\Sigma$ is denoted by $\Sigma^*$ and the empty
string is denoted by the symbol $\lambda$. Thus the term rewriting system
$\{ p(s(x)) \rightarrow x , \; s(p(x)) \rightarrow x \}$ is equivalent to
the string-rewriting system \[ \{ sp \rightarrow \lambda , \;
ps \rightarrow \lambda \} \]

If $R$ is a string
rewriting system (SRS) over alphabet $\Sigma$, then the single-step
reduction on $\Sigma^*$ can be written as:

For any $u,v \in \Sigma^*$, $u \rightarrow_R v$ iff there exists 
a rule~$l \rightarrow r
\in R$ such that 
$u = xly$ and $v = xry$ for some~$x, y \in \Sigma^*$; 
i.e., \[ {\rightarrow}_R^{} \; = \; \{( xly, \, xry) \; \mid \; (l \rightarrow r) \in R, 
  \, x,y \in {{\Sigma}^*} \} \] 

For any string rewriting system~$R$ over $\Sigma$, the set of all
irreducible strings, $IRR(R)$, is a regular
language: in fact, $IRR(R) = \Sigma_{}^* \smallsetminus \{\Sigma_{}^*
l_1 \Sigma_{}^* \,\cup ... \cup \, \Sigma_{}^* l_n \Sigma_{}^*\}$,
where $l_1,\dots, l_n$ are the left-hand sides of the rules in~$T$.

%We shall also be needing a special kind of normal form for strings, modulo  any given SRS $T$. With that purpose, we define, following S\'{e}nizergues/ ~\cite{Seniz-PartialC}, a {\em leftmost-smallest\/} reduction  as follows: let~$\succ$ be a given total ordering on the alphabet~$\Sigma$  and ~$\succ_{l}^{}$ be its length~+~lexicographic extension\footnote{Senizergues refers to this as the {\em short-lex\/} ordering}. A rewrite step $x l y \, \rightarrow \, x r y$ is {\em leftmost-smallest\/} if and only if (a)~no proper prefix of $xl$ is reducible, (b)~no proper suffix of~$l$ is reducible, and (c)~if $l \rightarrow r'$ is another rule in the rewrite system, then $r' \succ_{l}^{} r$.  A string $w'$ is said to be a {\em leftmost-smallest\/} ({\em ls-\/}) {\em normal form\/} of a string $w$ iff $w \rightarrow_{}^! w'$ using only  leftmost-smallest rewrite steps.Given a terminating system~$T$, it can be checked that any string~$w$  has a  \emph{unique} normal form produced by leftmost-smallest rewrite steps alone, since every rewrite step is unique; this unique normal form will be denoted as $\rho_T^{} (w)$.  For a language~$L \subseteq \Sigma_{}^*$, $\hat{\rho_T}(L)$ denotes the set of leftmost-smallest normal forms of strings in~$L$, i.e., $\hat{\rho_T}(L) ~ = ~ \{ \, \rho_T^{} (w) ~ | ~ w \in L \, \}$.

\medskip{}
Throughout the rest of the paper, 
$a, b, c, \dots, h$  will denote elements of the alphabet~$\Sigma$, 
and $l, r, u, v, w, x,
y,z$ will denote strings over~$\Sigma$. Concepts such as {\em normal form,\/}  {\em terminating,\/} 
{\em confluent,\/} and {\em convergent\/} have the same definitions in the string rewriting
systems as they have for the term rewriting systems.
An SRS~$T$ is called {\em canonical\/} if and only if it is convergent and
{\em inter-reduced,\/} i.e., no lhs is a substring of another lhs.

A  string rewriting system $T$ is said to be:
\begin{itemize}
\item[-] {\em monadic\/} iff the rhs of each rule in $T$ is either a single
 symbol or the empty string, e.g., $abc \rightarrow b$. \par 
\item[-] {\em dwindling\/} iff, for every rule $l \flr r$ in $T$, the rhs $r$ 
  is a {\em proper prefix\/} of its lhs $l$, e.g., $abc \rightarrow ab$. \par
\item[-] \emph{length-reducing} iff $| l | > | r |$ for all rules~$l \flr r$ in~$T$, e.g., $abc \rightarrow ba$.
\end{itemize}

\ignore{
There are also other classifications of string rewriting systems such as
\emph{special} and \emph{optimally reducing}.
Refer to~\cite{Botto} for more information on those
restricted classes.
}

\section{Fixed Point Problem}
  
%General string rewriting systems without restrictions for the word problem is undecidable. Therefore, we use restrictions on string rewriting systems to explore the decidability of these problems. Such as monadic Church-Rosser Thue Systems are decidable for FP problem. In the paper by Otto, fixed point problem is defined in terms of right factor(RF) problem. In a nutshell, if $uw \longleftrightarrow^{*}_{T} v$, $RF(v, u) = \{w \in IRR(T)| uw \longleftrightarrow^{*}_{T} v\}$, Otto proves that $RF(v,u)$ is a regular language and a non-deterministic finite automaton can be built to process of possibilities of right factors.

Note that for string rewriting systems the fixed point problem is equivalent 
to the following problem:

\begin{description}[align=left]
\item[Input:] A string-rewriting system~$R$ on an alphabet~$\Sigma$, and 
a string~$\alpha \in \Sigma_{}^+$.
\item[Question:] Does there exist a string~$W$ such that 
$\alpha W \; \stackrel{*}{{\longleftrightarrow}_R} \; W$?
\end{description}

This is a particular case of the Common Term Problem discussed
in the next section and is thus
decidable in polynomial time for
finite, monadic and convergent string rewriting systems.
\ignore{
In their paper, Lemma 3.7 shows the decidability of
CT, but a slight modification to that lemma, such as setting $h$ to be
$\epsilon$ will be sufficient to show FP is decidable. Therefore the
following equation in Lemma 3.7, $\exists w \in \Sigma^* : gw
\leftrightarrow^{*}_S hw$, will be equivalent to $\exists w \in
\Sigma^* : gw \leftrightarrow^{*}_S w$, which is equivalent to Fixed
Point problem definition.} %%
It is also a particular case of the \emph{conjugacy problem}.
Thus for finite, length-reducing and convergent systems it is
decidable in~\textbf{NP}~\cite{NOW}. The \textbf{NP}-hardness proof
in~\cite{NOW} also applies in our particular case: thus the problem
is~\textbf{NP}-complete for finite, length-reducing and convergent
systems.

%However, the non-uniform case, where the finite, length-reducing and
%convergent system~$R$ is \emph{fixed,} \textbf{FP}~is solvable in polynomial
%time.

%\pagebreak

\section{Common Term Problem}
  
Note that for string rewriting systems the common term problem is equivalent 
to the following problem:

\begin{description}[align=left]
\item[Input:] A string-rewriting system~$R$ on an alphabet~$\Sigma$, and 
two strings~$\alpha, \beta \in \Sigma_{}^*$.
\item[Question:] Does there exist a string~$W$ such that 
$\alpha W \; \stackrel{*}{{\longleftrightarrow}_R} \; \beta W$?
\end{description}

This is also known as \textit{Common Multiplier Problem} which
has been shown to
be decidable in polynomial time for monadic and convergent
string-rewriting systems (see, e.g.,~\cite{OND98}, Lemma~3.7). It is also known that the CT problem is undecidable for convergent string
rewriting systems; in fact, Otto et al.~\cite{OND98} proved that the
CT problem is undecidable even for \emph{convergent and
  length-reducing} string rewriting systems.
  
In this paper, we focus on the decidability of the CT problem for
\emph{convergent and dwindling} string rewriting systems. The
dwindling convergent systems are especially important because they are
widely used in the field of protocol analysis; in particular, digital
signatures, one-way hash functions and standard axiomatization of
encryption and decryption. This class is also known as subterm
convergent theories in the literature ~\cite{abadi2006deciding, 
Baudet2005, ciocaba2009, cortier2009}. Tools such as TAMARIN
prover~\cite{Meier2013} and YAPA~\cite{Baudet2009} use
subterm-convergent theories since these theories have nice properties
(e.g., finite basis property~\cite{chevalier2010compiling}) and
decidability results ~\cite{abadi2006deciding}.

\subsection{Dwindling CT problem}

We show that the CT (Common Term) problem is undecidable
for string rewriting systems that are dwindling and convergent.
We define CT as the following decision problem:

\begin{description}[align=left]
\item[Given:] A finite, non-empty alphabet $\Sigma$, strings $\alpha, \beta \in \Sigma^*$ and a dwindling, convergent string rewriting system $S$.
\item[Question:] Does there exist a string $W \in \Sigma^*$ such that $\alpha W \approx_S^{} \beta W$?
\end{description}

Note that interpreting concatenation the other way, i.e., $ab$ as $a(b(x))$, will make this
a \emph{unification} problem.

We show that Generalized Post Correspondence Problem ($GPCP$) reduces to the CT problem, where $GPCP$ stands for a 
variant of the modified post correspondence problem such that we will
provide the start and finish dominoes in the problem instance. This
slight change does not affect the decidability of the problem in any
way, i.e., $GPCP$ is also undecidable ~\cite{EKR82,GPCP}.

\begin{description}[align=left]
\item[Given:] A finite set of tuples $\left\{(x_i,\; y_i)\right \}^{n+1}_{i=0}$ such 
  that each $x_i, y_i \in \Sigma^+$, i.e., for   all $i$, $|x_i|>0$, $|y_i|>0$, and $(x_0, y_0), (x_{n+1}, y_{n+1})$
  are the \emph{start} and \emph{end} dominoes, respectively.
\item[Question:] Does there exist a sequence of indices $i_1, \ldots ,i_k$ such 
that \[ x_{0}\;x_{i_1}\; \ldots \;x_{i_k}\;x_{n+1} = y_{0}\;y_{i_1}\; \ldots \; y_{i_k}\;y_{n+1} ? \]
\end{description}

We work towards showing that the CT problem defined above is
undecidable by a many-one reduction from $GPCP$. First, we show how
to construct a string-rewriting system that is dwindling and
convergent from a given instance of $GPCP$.

Let $\left\{(x_i,\; y_i)\right \}^{n}_{i=1}$ the set of
``intermediate" dominoes and $(x_0, y_0), (x_{n+1}, y_{n+1})$, the
start and end dominoes respectively, be given. Suppose $\Sigma$ is the
alphabet given in the instance of $GPCP$. Without loss of
generality, we may assume $\Sigma = \{a,\, b\}$. Then set $\hat{\Sigma}
:= \{a,b\} \cup \{c_0,\;..\;c_{n+1}\} \cup \{\cent_1, \cent_2, B, a_1,
a_2, a_3, b_1, b_2, b_3\} $ which will be our alphabet for the
instance of CT.

Next we define a set of string homomorphisms used to simplify the discussion
of the reduction. Namely, we have the following:
\begin{equation*}
\begin{aligned}[c]
h_1(a) & = & a_1 \, a_2 \, a_3,\\
h_1(b) & = & b_1\, b_2 \, b_3,
\end{aligned}
\qquad
\begin{aligned}[c]
h_2(a) & = & a_1 \, a_2, \\
h_2(b) & = & b_1 \, b_2, 
\end{aligned}
\qquad
\begin{aligned}[c]
h_3(a) & = & a_1 \\
h_3(b) & = & b_1
\end{aligned}
\end{equation*}
such that each $h_i : \Sigma \rightarrow \hat{\Sigma}^+$ is a homomorphism. 

We are now in a position to construct the string rewriting system
$S$, with the following collections of rules, named as the Class~D
rules:

\begin{equation*}
\begin{aligned}[c]
\cent_1 h_1(a) & \rightarrow & \cent_1 h_3(a),\\
\cent_1 h_1(b) & \rightarrow & \cent_1 h_3(b),
\end{aligned}
\qquad
\begin{aligned}[c]
\cent_2 h_1(a) & \rightarrow & \cent_2 h_2(a)\\
\cent_2 h_1(b) & \rightarrow & \cent_2 h_2(b)
\end{aligned}
\end{equation*}

\noindent and, 
\begin{equation*}
\begin{aligned}[c]
h_i(a)\,h_1(a) & \rightarrow & h_i(a)\,h_i(a),\\
h_i(b)\,h_1(a) & \rightarrow & h_i(b)\,h_i(a),
\end{aligned}
\qquad
\begin{aligned}[c]
h_i(a)\,h_1(b) & \rightarrow & h_i(a)\,h_i(b)\\
h_i(b)\,h_1(b) & \rightarrow & h_i(b)\,h_i(b)
\end{aligned}
\end{equation*}
\noindent for $i \in \{2, 3\}$.

The erasing rules of our system consists of three classes. Class
\rom{1} rules are defined as:
\begin{eqnarray*}
\cent_1\,h_3(x_0)\,B\,c_0 & \rightarrow & \lambda \\
\cent_2\,h_2(y_0)\,c_0 & \rightarrow & \lambda
\end{eqnarray*}

and Class \rom{2} rules (for each $i = 1, 2, \ldots, n$),
\begin{eqnarray*}
	h_3(x_i)\,B\,c_i & \rightarrow & \lambda \\
	h_2(y_i)\,c_i\,B & \rightarrow & \lambda
\end{eqnarray*}

and finally Class \rom{3} rules,
\begin{eqnarray*}
	h_3(x_{n+1})\,c_{n+1} & \rightarrow & \lambda \\	
	h_2(y_{n+1})\,c_{n+1}\,B & \rightarrow & \lambda
\end{eqnarray*}

Clearly, given an instance of $GPCP$, the above set of rules can
effectively be constructed from the instance data. Also, by
inspection, we have that our system is confluent (there are no
overlaps between left-hand sides of any rules), terminating, and
dwindling.

We then set $\alpha = \cent_1$ and $\beta = \cent_2$ to complete the constructed
instance of $CT$ from $GPCP$. 

It remains to show that this instance of $CT$ is a ``yes" instance if
and only if the given instance of $GPCP$ is a ``yes" instance, i.e.,
the $CT$ has a solution if and only if the $GPCP$ does. In
that direction, we prove some results relating to~$S$.

\begin{lemma}\label{FirstStep}
Suppose $\cent_1 h_3(w_1) B \gamma \rightarrow^{!} \lambda$ and
$\cent_2 h_2(w_2) \gamma \rightarrow^{!} \lambda$
for some $w_1, w_2 \, \in \, \{a,b\}_{}^*$,
then $\gamma \in
\{c_1B,\;c_2B,\;...\;,c_nB\}_{}^{*} c_0$.
\end{lemma}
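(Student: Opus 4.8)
The plan is to analyze which erasing rules can possibly be applied in the two derivations $\cent_1 h_3(w_1) B \gamma \rightarrow^{!} \lambda$ and $\cent_2 h_2(w_2) \gamma \rightarrow^{!} \lambda$, using the fact that $S$ is convergent so that normal forms are unique and the order of rule applications is immaterial. First I would observe that since both strings reduce to $\lambda$, and every erasing rule has a left-hand side containing exactly one $\cent$-symbol (Class~\rom{1}) or no $\cent$-symbol at all (Classes~\rom{2},~\rom{3}), while the Class~D rules neither create nor destroy $\cent$-symbols, the single leading $\cent_1$ (resp. $\cent_2$) must ultimately be consumed by the unique Class~\rom{1} rule $\cent_1 h_3(x_0) B c_0 \rightarrow \lambda$ (resp. $\cent_2 h_2(y_0) c_0 \rightarrow \lambda$). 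Hence $c_0$ must occur in $\gamma$, and I would argue it occurs exactly once and as the last symbol: the $c_0$-symbols are introduced only in $\gamma$ (the constructed $\alpha,\beta$ and the homomorphic images $h_i(w)$ contain no $c_j$), Class~D rules do not touch them, and the only erasing rules mentioning $c_0$ are the two Class~\rom{1} rules, each consuming a single $c_0$ at the right end of its redex.

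Next I would peel off structure from the right. After the Class~\rom{1} rule fires on $\cent_1 h_3(w_1)B\gamma$, what remains to the left of the matched $\cent_1 h_3(x_0) B c_0$ block must itself reduce to $\lambda$; but there is nothing to its left, so in fact $\gamma$ must end in $B c_0$ and the portion of $h_3(w_1)$ together with the prefix of $\gamma$ before that final $Bc_0$ must, after the Class~D normalization, present exactly $h_3(x_0)$ immediately before $Bc_0$. The key structural point is that every Class~\rom{2} and Class~\rom{3} left-hand side is of the form $h_3(x_i) B c_i$, $h_2(y_i) c_i B$, $h_3(x_{n+1}) c_{n+1}$, or $h_2(y_{n+1}) c_{n+1} B$ — i.e. a homomorphic image of a domino word followed by a $c_i$ possibly flanked by $B$'s. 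Working right-to-left and tracking the $c_i$-indices and the $B$-markers, I would show inductively that $\gamma$ must decompose as a concatenation of blocks $c_{i_j} B$ (coming from the $h_3$-side erasures, Class~\rom{2}) terminated by $c_0$, which is exactly the claim $\gamma \in \{c_1 B, c_2 B, \ldots, c_n B\}^{*} c_0$. The second derivation, $\cent_2 h_2(w_2)\gamma \rightarrow^{!}\lambda$, is used to pin down that the same $\gamma$ is simultaneously consumable from the $h_2/y$-side, which forces the $B$'s to sit on the correct side of each $c_i$ and rules out the end-domino symbol $c_{n+1}$ from appearing (since Class~\rom{3} rules have no trailing $B$ on the $h_3$-side but do on the $h_2$-side, an asymmetry that cannot be reconciled inside $\gamma$ alone) and also rules out $c_i B$ blocks appearing in an order incompatible with a single trailing $c_0$.

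The main obstacle I anticipate is the bookkeeping around the $B$-markers and the interaction between the Class~D rewriting (which shortens the $h_1$-images to $h_3$- or $h_2$-images but does so only when anchored by a $\cent$ or by an already-shortened neighbor) and the erasing rules: one must be careful that a Class~\rom{2} redex $h_3(x_i) B c_i$ can only be formed after the relevant prefix has already been converted to its $h_3$-image, so the derivation has a forced left-to-right ``conversion wave'' followed by (or interleaved with) a right-to-left ``erasure wave''. Making this precise — ideally by a clean invariant on the intermediate strings, e.g. that any reachable string is of the form $\cent_1\, u\, B\, \gamma'$ where $u$ is a prefix-converted version of $h_1(w_1)$ and $\gamma'$ is a suffix of $\gamma$ — is where the real work lies; once such an invariant is in place, the stated form of $\gamma$ falls out by reading off which $c_i$-blocks survive. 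Confluence of $S$ guarantees we may choose whichever derivation order makes this analysis cleanest without loss of generality.
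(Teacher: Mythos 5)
Your overall plan (a global analysis of which erasing rules can fire, followed by a right-to-left decomposition of $\gamma$ into $c_iB$ blocks terminated by $c_0$) is plausible, but as written it has a genuine gap: the decisive combinatorial step is never carried out. You yourself defer ``the real work'' of formulating and proving the invariant that would force the decomposition, so the proposal remains a plan rather than a proof. The paper closes exactly this step with a short minimal-counterexample induction on $|\gamma|$: take $\gamma$ to be a shortest counterexample that is moreover irreducible; then inspection of $S$, together with the requirement that \emph{both} strings reduce to $\lambda$, forces $\gamma = c_i B\,\gamma'$ (it is the second string that forces the $B$ after $c_i$, since its only applicable redex there is the Class~\rom{2} left-hand side $h_2(y_i)\,c_i\,B$, while the first string uses $h_3(x_i)\,B\,c_i$); one application of the Class~\rom{2} rules on each side then produces a strictly shorter instance of the same shape, a contradiction. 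Peeling one block from the left under this induction is where your sketch needed to land, and it never does.

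A second, more local problem: your argument that $c_0$ occurs exactly once and as the last symbol of $\gamma$ is not sound as written. The sentence ``what remains to the left of the matched $\cent_1 h_3(x_0)\,B\,c_0$ block must itself reduce to $\lambda$; but there is nothing to its left, so $\gamma$ must end in $Bc_0$'' addresses the wrong side: that redex contains the unique $\cent_1$ and is therefore anchored at the left end, so what must be controlled is the material to its \emph{right}. Without an irreducibility (or minimality) assumption on $\gamma$ you cannot exclude a suffix after the matched $c_0$ that erases on its own --- for instance $\gamma$ ending in $c_0\,h_3(x_1)\,B\,c_1$, where the trailing block $h_3(x_1)\,B\,c_1$ reduces to $\lambda$ by a Class~\rom{2} rule in both derivations. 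The paper's proof quietly adds exactly this hypothesis (a minimal counterexample with $\gamma \in IRR(R)$), and that is what legitimizes its ``by inspection of $S$'' step; your proposal never invokes it, and without it your structural claims (single $c_0$, $c_0$ last, exclusion of $c_{n+1}$) do not follow. Note also the small slip that $\gamma$ need not end in $Bc_0$: in the base case $\gamma = c_0$, the $B$ of the Class~\rom{1} redex is the explicit $B$ supplied by the hypothesis, not a symbol of $\gamma$.
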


\begin{proof}
Suppose $\gamma$ is a minimal counter example with respect to length
and $\gamma \in IRR(R)$. In order for the terms to be reducible,
$\gamma = c_iB\; \gamma^\prime$ (this follows by inspection of $S$).
After we replace the $\gamma$ at the equation in the lemma, we get:
\begin{eqnarray*}
\cent_1 \;h_3(w_1)\; B\; c_i\; B\; \gamma^\prime & \rightarrow & \cent_1 {h_3(w_1)}^\prime B\; \gamma^\prime \rightarrow^{!} \lambda \\
\cent_2 \;h_2(w_2)\; c_i\; B\; \gamma^\prime & \rightarrow & \cent_2 {h_2(w_2)}^\prime \; \gamma^\prime \rightarrow^{!} \lambda
\end{eqnarray*}
\noindent
by applying the Class \rom{2} rules and finally Class \rom{1} rule to
erase the $\cent$ signs.  Then, however, $\gamma'$ is also a
counterexample, and $| \gamma' | < | \gamma |$, which is a
contradiction.
\end{proof}

We are now in a position to state and prove the main result of 
this section. 

\begin{theorem}
The CT problem is undecidable for dwindling convergent string-rewriting systems.
\end{theorem}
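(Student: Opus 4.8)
The statement claims undecidability of the dwindling convergent CT problem, and the system $S$ together with the instance data $\alpha=\cent_1$, $\beta=\cent_2$ has already been built from an arbitrary $GPCP$ instance and checked to be dwindling and convergent. So the whole content of the proof is the correctness of the reduction: the constructed instance is a ``yes'' instance of $CT$ iff the original instance is a ``yes'' instance of $GPCP$. I would prove the two implications separately.

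For the direction \emph{$GPCP$-solution $\Rightarrow$ $CT$-solution}, suppose $i_1,\dots,i_k$ is a solution and let $u = x_0 x_{i_1}\cdots x_{i_k}x_{n+1} = y_0 y_{i_1}\cdots y_{i_k}y_{n+1}$ be the common word. I would take
\[
W \;=\; h_1(u)\; c_{n+1}\; B\,c_{i_k}\; B\,c_{i_{k-1}}\cdots B\,c_{i_1}\; B\,c_0
\]
and show $\cent_1 W \to_S^{!}\lambda$ and $\cent_2 W \to_S^{!}\lambda$, hence $\cent_1 W \approx_S^{}\cent_2 W$. The Class~D rules are forced to act first and convert the contiguous $h_1$-block: $\cent_1 h_1(u)\,\gamma \to_S^{*}\cent_1 h_3(u)\,\gamma$ (first rule on the leading block, then the $i=3$ propagation rules block by block), and $\cent_2 h_1(u)\,\gamma \to_S^{*}\cent_2 h_2(u)\,\gamma$ via the $i=2$ rules. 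Since $h_2,h_3$ are homomorphisms, $h_3(u)=h_3(x_0)h_3(x_{i_1})\cdots h_3(x_{n+1})$ and likewise for $h_2$ with the $y$'s. On the $\cent_1$ side one then erases right-to-left: Class~\rom{3} rule $h_3(x_{n+1})c_{n+1}\to\lambda$, then $h_3(x_{i_j})Bc_{i_j}\to\lambda$ (Class~\rom{2}) for $j=k,\dots,1$, and finally $\cent_1 h_3(x_0)Bc_0\to\lambda$ (Class~\rom{1}), reaching $\lambda$; on the $\cent_2$ side, $h_2(y_{n+1})c_{n+1}B\to\lambda$, then $h_2(y_{i_j})c_{i_j}B\to\lambda$ for $j=k,\dots,1$, then $\cent_2 h_2(y_0)c_0\to\lambda$. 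The placement of $B$'s in $\gamma$ works because on the $x$-side the $B$ of Class~\rom{2}/\rom{1} sits to the \emph{left} of the $c$, on the $y$-side the $B$ of Class~\rom{3}/\rom{2} sits to the \emph{right}, and exactly $k+1$ copies of $B$ are consumed in each case.

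For the converse, \emph{$CT$-solution $\Rightarrow$ $GPCP$-solution}, suppose $W\in\hat{\Sigma}^{*}$ satisfies $\cent_1 W \approx_S^{}\cent_2 W$; by convergence both strings share an $S$-normal form $N$. The plan is to show $W$ is forced into the shape above. First, the only rule that can delete the leading $\cent_1$ is Class~\rom{1} ($\cent_1 h_3(x_0)Bc_0\to\lambda$), and similarly $\cent_2$ can only go via $\cent_2 h_2(y_0)c_0\to\lambda$; since $N$ cannot begin with both $\cent_1$ and $\cent_2$, at least one $\cent$ is erased, and chasing the induced cascade of $\cent$- and $c_0$-deletions (and noting that $a,b$ occur in no rule, hence are inert and may be assumed absent from $W$) forces $N=\lambda$, i.e. $\cent_1 W\to_S^{!}\lambda$ and $\cent_2 W\to_S^{!}\lambda$. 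Next, inspecting which reductions are productive on $\cent_1 W$: the only moves are Class~D rules converting a maximal prefix $h_1(w)$ of $W$ (with $w\in\{a,b\}^{*}$), after which the remaining suffix must consist solely of $B$ and $c_0,\dots,c_{n+1}$; so $W=h_1(w)\,\gamma$ with $\gamma\in\{B,c_0,\dots,c_{n+1}\}^{*}$ and $\cent_1 h_3(w)\gamma\to_S^{!}\lambda$, $\cent_2 h_2(w)\gamma\to_S^{!}\lambda$. Finally one peels $\gamma$ apart: the erasure must begin with Class~\rom{3}, which forces $\gamma = c_{n+1}B\gamma'$ with $h_3(w)$ (resp.\ $h_2(w)$) having $h_3(x_{n+1})$ (resp.\ $h_2(y_{n+1})$) as a suffix of the part consumed so far; then $\cent_1 h_3(w')B\gamma'\to_S^{!}\lambda$ and $\cent_2 h_2(w')\gamma'\to_S^{!}\lambda$ for the truncated $w'$, so Lemma~\ref{FirstStep} gives $\gamma'\in\{c_1B,\dots,c_nB\}^{*}c_0$. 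Reading indices $i_k,\dots,i_1$ off $\gamma'$ and applying Class~\rom{2} then Class~\rom{1}, the $\cent_1$-reduction consumes exactly $h_3(x_0)h_3(x_{i_1})\cdots h_3(x_{i_k})h_3(x_{n+1}) = h_3(x_0 x_{i_1}\cdots x_{i_k}x_{n+1})$, so by injectivity of $h_3$, $w = x_0 x_{i_1}\cdots x_{i_k}x_{n+1}$; the parallel $\cent_2$-reduction forces $w = y_0 y_{i_1}\cdots y_{i_k}y_{n+1}$. Equating gives a $GPCP$ solution.

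The main obstacle is the converse, specifically the rigidity claims: proving that an \emph{arbitrary} witness $W$ — which a priori may contain $\cent_1$, $\cent_2$, stray letters $a_1,\dots,b_3$, and $c_i$'s in arbitrary order — is forced into the form $h_1(w)\,\gamma$ with $\gamma$ of the peeled shape, and that the common normal form must be $\lambda$. The fact that $S$ has no overlaps between left-hand sides (hence is confluent) is what makes this manageable: the normal form is reduction-order independent, so the analysis reduces to showing that essentially each step is forced. Lemma~\ref{FirstStep} is the first instalment of this peeling; the remaining work is to state and prove the analogous lemmas for the Class~\rom{2} and Class~\rom{3} stages and to dispose of the degenerate possibilities for $W$ (empty, $\cent$-laden, or otherwise non-convertible prefixes) for which no reduction to $\lambda$ exists.
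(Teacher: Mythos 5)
Your proposal follows essentially the same route as the paper: the same witness $h_1(u)\,c_{n+1}B c_{i_k}B\cdots Bc_{i_1}Bc_0$ reduced to $\lambda$ on both sides for the ``if'' direction, and for the converse the same peeling of a minimal witness into a maximal $h_1$-prefix plus a suffix forced to start with $c_{n+1}B$, finished off by Lemma~\ref{FirstStep}. The forcing steps you flag as still needing separate lemmas (normal form $\lambda$, shape of the suffix, absence of junk letters) are exactly the points the paper also asserts by inspection rather than proving in detail, so your sketch is correct and at the paper's own level of rigor --- indeed slightly more careful about the index reversal and the inert letters $a,b$.
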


\begin{proof}
We first complete the ``only if" direction. Suppose CT has a solution
such that $\cent_1 Z \downarrow \cent_2 Z$ where $Z$ is a minimal
solution. We show that $Z$ corresponds to a solution for $GPCP$. Let $Z = h_1(Z_1)Z_2$ such that $h_1(Z_1)$ is the longest
prefix of $Z$ such that the following relationship holds: $Z = Z^{\prime}\; Z_2$
and $Z^{\prime} = h_1(Z_1)$ for some string $Z_1$.

$h_1(Z_1)$ can be rewritten to $h_3(Z_1)$ and $h_2(Z_1)$ by applying
the Class D rules. Thus, we will get
\begin{eqnarray*}
\cent_1 \;h_1(Z_1)\; Z_2 & \rightarrow^{*} & \cent_1 h_3(Z_1)\; Z_2 \\
\cent_2 \;h_1(Z_1)\; Z_2 & \rightarrow^{*} & \cent_2 h_2(Z_1)\; Z_2
\end{eqnarray*}
In order for terms to be reducible simultaneously, $Z_2$ must be of
the form $Z_2 = c_{n+1}\;B\;Z_2^{\prime}$. Thus
\begin{eqnarray*}
\cent_1 \;h_3(Z_1)\; Z_2 & = & \cent_1 h_3(Z_1)\; c_{n+1}\;B\;Z_2^{\prime} \\
\cent_2 \;h_2(Z_1)\; Z_2 & = & \cent_2 h_2(Z_1)\; c_{n+1}\;B\;Z_2^{\prime}
\end{eqnarray*}
i.e., $Z_1 = Z_1^\prime \: x_{n+1}$ and $Z_1 = Z_1^{\prime\prime} \: y_{n+1}$.
By applying the Class \rom{3} rules, these equations will reduce to:
\begin{eqnarray*}
\cent_1 h_3(Z_1)\; c_{n+1}\;B\;Z_2^{\prime} & \rightarrow & \cent_1 h_3(Z_1^\prime) \; B\;Z_2^{\prime}\\
\cent_2 h_2(Z_1)\; c_{n+1}\;B\;Z_2^{\prime} & \rightarrow & \cent_2 h_2(Z_1^{\prime\prime}) \; Z_2^{\prime}
\end{eqnarray*}
We now apply Lemma~\ref{FirstStep} to conclude that $Z_2^{\prime} \in
\{c_1B,\;c_2B,\; \ldots,c_nB\}_{}^{*}c_0$.\\[-5pt]

At this point we have that:
\[ Z_2 = c_{n+1}Bc_{i_1}Bc_{i_2}{\cdots}Bc_{i_k}Bc_0 \text{\:\:\: for some } i_1, \ldots, i_k\]
Then the sequence of dominoes
\[(x_0, y_0), (x_{i_k}, y_{i_k}), {\ldots} , (x_{i_1}, y_{i_1}), (x_{n+1}, y_{n+1}) \]
will be a solution to the given instance of $GPCP$ with solution
string~$Z_1$ since the left-hand sides of the Class~\rom{1}, \rom{2}, \rom{3}
rules consist of the images of domino strings under $h_2$ and
$h_3$. More specifically, there is a finite number of $B$'s and $c_i$'s
in~$Z_2$, so there must be a decomposition of $h_1(Z_1)$:
\[ h_1(Z_1) = h_1(x_0)h_1(x_{i_1}) \cdots h_1(x_{i_k})h_1(x_{n+1}) \]
and
\[ h_1(Z_1) = h_1(y_0)h_1(y_{i_1}) \cdots h_1(y_{i_k})h_1(y_{n+1}) \]
Thus, we have the following reductions with Class~D rules:
\[ \cent_1\;h_1(Z_1)\;Z_2 \rightarrow^{*} \cent_1\;h_3(Z_1)\;Z_2 \]
\[ \cent_2\;h_1(Z_1)\;Z_2 \rightarrow^{*} \cent_2\;h_2(Z_1)\;Z_2 \]
Finally, by Class  \rom{1},  \rom{2},  \rom{3} rules:
\[\cent_1\;h_3(Z_1)\;Z_2 \rightarrow^{*} \cent_1\; h_3(x_0)\;B\;c_0 \rightarrow \lambda \]
\[\cent_2\;h_2(Z_1)\;Z_2 \rightarrow^{*} \cent_2\; h_2(y_0)\;c_0 \rightarrow \lambda \]
and $Z_1$ is a solution to the instance of the~$GPCP$.

\vspace{0.1in}
We next prove the ``if" direction. Assume that the given instance of $GPCP$ has
a solution. Let $w$ be the string corresponding to the matching dominoes, and let
\[(x_0, y_0), (x_{i_1}, y_{i_1}), {\ldots} , (x_{i_k}, y_{i_k}), (x_{n+1}, y_{n+1}) \]
be the sequence of tiles that induces the match.  Let $Z =
c_{n+1}Bc_{i_1}Bc_{i_2}{\cdots}Bc_{i_k}Bc_0$. We show that
$\cent_1 h_1(w)Z\; \downarrow\; \cent_2 h_1(w)Z$.

First apply the Class $D$ rules to get: 
\[ {\cent_1}h_1(w)Z \rightarrow^{*} {\cent_1}h_3(w)Z \]
\[ {\cent_2}h_1(w)Z \rightarrow^{*} {\cent_2}h_2(w)Z \]

but then we can apply Class~\rom{1}, \rom{2}, \rom{3} rules to reduce both 
of the above terms to $\lambda$. 
\end{proof}

\vspace{0.05in}
\noindent
This result strengthens the earlier undecidability result of Otto
for string-rewriting systems that are \emph{length-reducing} and convergent.

%\pagebreak

\section{Common Equation Problem}
  
For the class of string rewriting systems the common equation problem is equivalent 
to the following problem:

\begin{description}[align=left]
\item[Input:] A string-rewriting system~$R$ on an alphabet~$\Sigma$, and 
strings~$\alpha_1, \alpha_2, \beta_1, \beta_2 \in \Sigma_{}^*$.
\item[Question:] Do there exist strings~$W_1, W_2 \in \Sigma^*$ such that 
$\alpha_1 W_1 \; \stackrel{*}{{\longleftrightarrow}_R} \; \alpha_2 W_2$ and
$\beta_1 W_1 \; \stackrel{*}{{\longleftrightarrow}_R} \; \beta_2 W_2$ 
\end{description}
  
This problem is also undecidable for the dwindling systems. The construction
we used for the CT case works here as well, since 
if $\alpha_2 = \beta_2 = \lambda$, then \[
\alpha_1 W_1 \; \stackrel{*}{{\longleftrightarrow}_R} \; W_2 \;
\stackrel{*}{{\longleftrightarrow}_R} \;
\beta_1 W_1 \] (This also shows 
that, in the string-rewriting case, CT is a particular case of~CE.)

For monadic and convergent string rewriting systems, the Common
Equation (CE) problem is decidable. This can be
shown using Lemma~3.6 in~\cite{OND98}. (See also
Theorem~3.11 of~\cite{OND98}.)
%\todo{This has to be elaborated.}

%\input{appendix}

%\nocite{DBLP:journals/jacm/BuntineB94,DBLP:journals/jsc/ComonL89}

\pagebreak

\bibliography{unifs-duals}{}
\bibliographystyle{plain}

\pagebreak
\section*{Appendix}
\noindent
The following term rewriting system $R_1^{{lin}}$ specifies a fragment of linear arithmetic using
\emph{successor} and \emph{predecessor} operators:
\begin{eqnarray*}
 x - 0 & \rightarrow & x \\
 x - x & \rightarrow & 0 \\
 s(x) - y & \rightarrow & s(x-y)\\
 p(x) - y & \rightarrow & p(x-y)\\
 x - p(y) & \rightarrow & s(x-y)\\
 x - s(y) & \rightarrow & p(x-y)\\
 p(s(x)) & \rightarrow & x\\
 s(p(x)) & \rightarrow & x
\end{eqnarray*}
This TRS is convergent.

\end{document}